\theoremstyle{plain}
\newtheorem{theorem}{Theorem}
\newtheorem{statement}[theorem]{Statement}
\newtheorem{corollary}[theorem]{Corollary}
\theoremstyle{remark}
\newtheorem{remark}{Remark}
\theoremstyle{definition}
\newtheorem{definition}{Definition}
\newlength{\slen}
\newcommand{\sout}[1]{\settowidth{\slen}{$#1$}%
\hbox to 0pt{\rule[0.5ex]{\slen}{0.5pt}\hss}#1}
\def\ELB(#1,#2){\put(#1,#2){\color[gray]{0.5}\circle*{0.4}}%
{\thicklines\put(#1,#2){\circle{0.4}}}}
\def\ELW(#1,#2){\put(#1,#2){\color{white}\circle*{0.4}}%
{\thicklines\put(#1,#2){\circle{0.4}}}}
\newcommand{\expand}{\mathop{\rm expand}\nolimits}
\newcommand{\StirlingII}[2]%
{\Bigl\{\!\!{\footnotesize\begin{array}{c}#1\\#2\end{array}}\!\!\Bigr\}}
\begin{document}
\title{On the combinatorics of several integrable hierarchies}
\author{V E Adler}
\address{Landau Institute for Theoretical Physics,
Ak. Semenov str. 1-A, 142432 Chernogolovka, Russian Federation}
\ead{adler@itp.ac.ru}

\begin{abstract}\small
We demonstrate that statistics of certain classes of set partitions is
described by generating functions related to the Burgers, Ibragimov--Shabat and
Korteweg--de Vries integrable hierarchies.
\smallskip

\noindent{\it Keywords\/}: set partition, $B$ type partition, non-overlapping
partition, generating function, Bell polynomial, Dowling numbers, Bessel
numbers

\ams{05A18, 37K10}
\pacs{02.10.Ox, 02.30.Ik}
\end{abstract}

\nosections


\hfill\begin{minipage}{110mm}\small
My procedure was this: I would count the stones by eye and write down the
figure. Then I would divide them into two handfuls that I would scatter
separately on the table. I would count the two totals, note them down, and
repeat the operation.\smallskip

\rightline{Borges, Blue tigers (translated by Andrew Hurley)}
\end{minipage}

\section{Introduction}

In the last decades numerous interrelations were discovered between the
combinatorics and the theory of integrable systems. Mainly, these links involve
{\em solutions}, either special ones, such as the Painlev\'e transcendents
\cite{Deift_2000} and solitons \cite{Grosset_Veselov_2005}, or generic ones,
such as the tau-function of the Kadomtsev--Petviashvili hierarchy
\cite{Alexandrov_Mironov_Morozov_Natanzon_2012}.

On the other hand, {\em equations} themselves exhibit a certain combinatorial
nature, due to the recurrent relations which govern the higher symmetries and
conservation laws of integrable hierarchies. This aspect was paid less
attention so far, although a quite simple description on the language of set
partitions was known for long for the Burgers hierarchy
\cite{Lambert_Loris_Springael_Willox_1994, Lambert_Springael_2008}. We
reproduce this combinatorial interpretation for the sake of completeness and as
a base for further generalizations. New results obtained in the paper are
related to the Ibragimov--Shabat and KdV hierarchies, see table
\ref{t:contents}. In these cases the combinatorics becomes more complicated,
since the ordinary set partitions are replaced by special ones which are
characterized by additional restrictions. Moreover, this combinatorics comes in
disguise: for instance, in the Burgers case we consider the generating function
intermediately for the higher flows, but in the KdV case we have to consider a
formal series for the logarithmic derivative of $\psi$-function which solves
the Riccati equation (inversion of the Miura map, see e.g.
\cite{Gelfand_Dikii_1975}). The flows and conservation laws of the hierarchy
are related with this generating function by simple algebraic relations. In the
Ibragimov--Shabat case a natural choice of the generating function is dictated
by the linearization procedure.

\begin{table}
\caption{\label{t:contents}Contents of the paper.}
\begin{indented}\small
\item[]\begin{tabular}{lp{85mm}}
\br
Hierarchy          & Combinatorial objects, their numbers\\
\mr
potential Burgers  & Set partitions, Bell polynomials $Y_n$,
                     Stirling numbers of the 2nd kind, Bell numbers\\[2pt]
Burgers            & Set partitions without distinguished singleton\\[2pt]
Ibragimov--Shabat  & $B$-type partitions, polynomials (\ref{an}),
                     $B$-analogs of Stirling numbers of the 2nd kind,
                     Dowling numbers\\[2pt]
Korteweg--de Vries & Non-overlapping partitions, polynomials (\ref{fn}),
                     number triangle (\ref{fu}), Bessel numbers\\
\br
\end{tabular}
\end{indented}
\end{table}

Although we are not interested in `explicit' formulae for the coefficients of
generating functions here, it should be mentioned that such formulae for the
potential KdV flows actually do exist. One of them, obtained already in
\cite{Gelfand_Dikii_1975}, represents the coefficient of a given monomial as a
certain multiple integral. Another formula obtained in \cite{Schimming_1995} is
of more combinatorial nature, but it remains very complicated. Only in the case
of pot-Burgers hierarchy the formula for the coefficients can be considered as
truly an explicit one.

\section{Potential Burgers hierarchy}

The pot-Burgers hierarchy is obtained from the linear heat equation hierarchy
\begin{equation}\label{heat}
 \psi_{t_n}=\psi_n
\end{equation}
by means of the change of dependent variable $\psi=e^v$. This yields
\begin{equation}\label{pot-B}
 v_{t_n}=e^{-v}D^n(e^v)=(D+v_1)^n(1)=Y_n(v_1,\dots,v_n),\quad n=0,1,2,\dots
\end{equation}
Here and further on we denote the derivatives as follows: $v_n=D^n(v)$,
$D=\partial/\partial x$, $v_{t_n}=\partial v/\partial t_n$. Several first
equations (\ref{pot-B}) are shown in the table \ref{t:pot-B}. A meaningful
combinatorics appears just from nothing!

\begin{table}
\caption{\label{t:pot-B} The potential Burgers hierarchy (weight $w(v_j)=j$).}
\begin{indented}
\item[]
\begin{eqnarray*}
\mr\\[-4pt]
 v_{t_0}= 1\\
 v_{t_1}= v_1\\
 v_{t_2}= v_2+v^2_1\\
 v_{t_3}= v_3+3v_1v_2+v^3_1\\
 v_{t_4}= v_4+(4v_1v_3+3v^2_2)+6v^2_1v_2+v^4_1\\
 v_{t_5}= v_5+(5v_1v_4+10v_2v_3)+(10v^2_1v_3+15v_1v^2_2)+10v^3_1v_2+v^5_1\\
\mr
\end{eqnarray*}
\end{indented}
\end{table}

It is easy to see that $Y_n$ are polynomials with integer coefficients,
homogeneous with respect to the weight $w(v_j)=j$. These polynomials play a
fundamental role in combinatorics and are known under the name of {\em (full
exponential) Bell polynomials} \cite{Comtet_1974}. An equivalent definition
through the exponential generating functions reads
\[
 \sum^\infty_{n=0}Y_n\frac{z^n}{n!}
  =e^{-v}\sum^\infty_{n=0}D^n(e^v)\frac{z^n}{n!}
  =e^{v(x+z)-v(x)}
  =\exp\left(\sum^\infty_{n=1}v_n\frac{z^n}{n!}\right)
\]
and this immediately implies the explicit formula
\begin{equation}\label{Yn}
 Y_n=\sum_{k_1+2k_2+\dots+rk_r=n}
 \frac{n!}{(1!)^{k_1}\dots(r!)^{k_r}k_1!\dots k_r!}\,v^{k_1}_1\dots v^{k_r}_r.
\end{equation}
Its combinatorial interpretation is obvious:

--- monomials correspond to partitions of the {\em number} $n$;

--- coefficients of monomials count partitions of the {\em set}
$[n]=\{1,\dots,n\}$ into the subsets (or blocks) of prescribed size.

For example, let us list all set partitions for $n=2,3,4$:

{\small
\begin{eqnarray*}
n=2:\quad
\begin{array}[t]{cc}
 v_2 & v^2_1 \\
  2  & 1+1 \\[3pt]
  12 & 1|2
\end{array}\qquad\qquad
n=3:\quad
\begin{array}[t]{ccc}
  v_3 & 3v_1v_2 & v^3_1 \\
  3   & 1+2     & 1+1+1 \\[3pt]
  123 & 1|23    & 1|2|3 \\
      & 2|13    &       \\
      & 3|12    &
\end{array}\\[10pt]
n=4:\quad
\begin{array}[t]{ccccc}
 v_4  & 4v_1v_3 & 3v^2_2 & 6v^2_1v_2 & v^4_1   \\
 4    & 1+3     & 2+2    & 1+1+2     & 1+1+1+1 \\[3pt]
 1234 & 1|234   & 12|34  & 1|2|34    & 1|2|3|4 \\
      & 2|134   & 13|24  & 1|3|24    &         \\
      & 3|124   & 14|23  & 1|4|23    &         \\
      & 4|123   &        & 2|3|14    &         \\
      &         &        & 2|4|13    &         \\
      &         &        & 3|4|12    &
\end{array}
\end{eqnarray*}}

Recall, that each set partition is considered as unordered set (with blocks as
the elements), that is, ordering of the blocks does not matter. However, it is
often useful to enumerate the blocks somehow. For the sake of definiteness, we
will adopt the enumeration corresponding to the ordering of the minimal
elements in the blocks.

We see that the combinatorics behind the hierarchy (\ref{pot-B}) is quite
simple. The following statement is well known, see e.g.
\cite{Lambert_Loris_Springael_Willox_1994, Lambert_Springael_2008}.

\begin{theorem}\label{th:pot-B}
In the potential Burgers hierarchy, the coefficient of the monomial
$v^{k_1}_1\dots v^{k_r}_r$ is equal to the number of partitions of the set of
$n=k_1+2k_2+\dots+rk_r$ elements into $k_1$ blocks with 1 element, $k_2$ blocks
with 2 elements, \dots, $k_r$ blocks with $r$ elements.
\end{theorem}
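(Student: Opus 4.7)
The plan is to prove the claim by induction on $n$, matching the operator recursion for $Y_n$ with the natural recursion for set partitions of $[n]$.

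First, I would record the key algebraic recurrence $Y_{n+1}=(D+v_1)(Y_n)$, which is immediate from the defining identity $Y_n=(D+v_1)^n(1)$ in (\ref{pot-B}). The base case $Y_0=1$ corresponds to the unique (empty) partition of the empty set. The induction hypothesis is the combinatorial expansion
\[
 Y_n=\sum_{\pi\in P([n])}\prod_{B\in\pi}v_{|B|},
\]
where $P([n])$ is the set of all partitions of $[n]=\{1,\dots,n\}$.

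For the inductive step I would apply $D+v_1$ term by term. Using the derivation rule $D(v_k)=v_{k+1}$ together with the Leibniz rule, a single monomial $\prod_{B\in\pi}v_{|B|}$ is sent to
\[
 v_1\prod_{B\in\pi}v_{|B|}+\sum_{B_0\in\pi}v_{|B_0|+1}\prod_{B\in\pi\setminus\{B_0\}}v_{|B|}.
\]
The first summand is the monomial of the partition of $[n+1]$ obtained from $\pi$ by adjoining the singleton $\{n+1\}$; each summand in the second group is the monomial of the partition obtained from $\pi$ by inserting $n+1$ into the block $B_0$. Every partition of $[n+1]$ is produced in exactly one of these two ways from a unique $\pi\in P([n])$, so summation over $\pi$ reproduces the analogous expansion for $Y_{n+1}$. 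Grouping monomials of identical type $v_1^{k_1}\cdots v_r^{k_r}$ then yields precisely the asserted count.

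As a sanity check I would match this against the explicit formula (\ref{Yn}) derived via the exponential generating function: the multinomial coefficient $n!/((1!)^{k_1}\cdots(r!)^{k_r}k_1!\cdots k_r!)$ is the classical number of set partitions of $[n]$ with $k_i$ blocks of size $i$, obtained by choosing an ordered sequence of blocks of prescribed sizes and then quotienting by the permutations that permute blocks of equal size. No real obstacle is anticipated; the only point requiring care is the bookkeeping of the Leibniz step, which makes the bijection between the action of $D+v_1$ on monomials and the ``add $n{+}1$'' operation on set partitions manifest.
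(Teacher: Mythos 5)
Your proposal is correct and follows essentially the same route as the paper: both identify the action of $D+v_1$ on the monomial of a partition with the two ways of extending a partition of $[n]$ to one of $[n+1]$ (insert $n+1$ into an existing block, or adjoin it as a new singleton), and both note the explicit formula (\ref{Yn}) as an independent check. The only difference is presentational — you phrase it as induction on $n$, while the paper phrases it via generating operations $d_j$ and $M$ — which amounts to the same argument.
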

\begin{proof}
One proof follows intermediately from the explicit formula (\ref{Yn}) for the
coefficients. However, we will not always have such a formula at hand. The
following reasoning provides a more conceptual proof.

Let $\Pi_{n,k}$ denotes the set of all partitions of the set $[n]$ into $k$
blocks and $\Pi_n$ denotes the set of all partitions of $[n]$. Let us consider
operations
\[
 d_j:\Pi_{n,k}\to \Pi_{n+1,k},~~ j=1,\dots,k,\qquad
 M:\Pi_{n,k}\to \Pi_{n+1,k+1}
\]
defined, respectively, as appending of the element $n+1$ to $j$-th block or
adding it to the partition as a new singleton. This can be visualized by the
following diagram:
\begin{center}
\begin{picture}(15,7.5)(-0.5,-0.5)
 \path(1,1)(7,1) \ELB(1,1)\ELB(2,1)\ELB(7,1)
 \path(3,2)(5,2) \ELB(3,2)\ELB(5,2)
 \path(4,3)(12,3)\ELB(4,3)\ELB(9,3)\ELB(12,3)
                 \ELB(6,4)
 \path(8,5)(11,5)\ELB(8,5)\ELB(10,5)\ELB(11,5)
 \put(13,3.5){\oval(1,6)}
 \multiput(13,1)(0,1){6}{\circle{0.4}}
{\small
\multiputlist(1,0)(1,0){$1$,$2$,,,,,,,,,,$n$}
\multiputlist(0,1)(0,1){$1$,$2$,,,$k$}}
\multiputlist(14.5,1)(0,1){$d_1$,$d_2$,,,$d_k$,$M$}
\end{picture}
\end{center}
Starting from the partition $\{\varnothing\}$ of the set $[0]=\varnothing$ and
applying operations $d_j,M$, one can generate, in a unique way, any partition
of $[n]$. Indeed, the required sequence of operations is uniquely recovered by
deleting elements in the inverse order from $n$ to $1$.

In the theorem, a set partition $\pi$ with $k_1$ 1-blocks, \dots, $k_r$
$r$-blocks corresponds to the monomial $p(\pi)=v^{k_1}_1\dots v^{k_r}_r$. The
differentiation $D(p(\pi))$ by the Leibnitz rule amounts to replacing of $v_i$
with $v_{i+1}$ for each factor in turn, taking the multiplicity into account.
In the partition language, this means that we add the new element to each block
in turn. As the result, we obtain the sum of monomials $p(d_j(\pi))$ for all
admissible values of $j$. Multiplication of the monomial $p(\pi)$ by $v_1$
gives the monomial $p(M(\pi))$. Thus, the polynomials
\[
 P_n=\sum_{\pi\in\Pi_n}p(\pi)
\]
are related by the recurrent relation $P_{n+1}=(D+v_1)(P_n)$ and since
$P_1=v_1$, hence $P_n=Y_n(v_1,\dots,v_n)$.
\end{proof}

A less detailed statistics is obtained if we forget about sizes of blocks and
consider just their number in a given partition. Obviously, this correspond to
summing up the coefficients of terms of the same degree, which gives us the
{\em Bell polynomials} of one variable
\[
 B_n(u)=Y_n(u,\dots,u)=(u\partial_u+u)^n(1)=\sum^n_{k=0}\StirlingII{n}{k}u^k.
\]
The coefficient $\StirlingII{n}{k}$ of $u^k$, that is, the number of partitions
of $[n]$ into $k$ blocks, is called the {\em Stirling number of the second
kind} \cite[A048993]{OEIS}:

{\small
\[
 \begin{array}{llllllllll}
  1&  &   &    &    &    &   &   && 1\\
  0& 1&   &    &    &    &   &   && 1\\
  0& 1& 1 &    &    &    &   &   && 2\\
  0& 1& 3 & 1  &    &    &   &   && 5\\
  0& 1& 7 & 6  & 1  &    &   &   && 15\\
  0& 1& 15& 25 & 10 & 1  &   &   && 52\\
  0& 1& 31& 90 & 65 & 15 & 1 &   && 203\\
  0& 1& 63& 301& 350& 140& 21& 1 && 877
 \end{array}
\]}

\noindent By definition, $\StirlingII{n}{0}=0$ at $n>0$ and
$\StirlingII{0}{0}=\#\{\varnothing\}=1$. The total numbers of set partitions
with $n$ elements, the Bell or the exponential numbers \cite[A000110]{OEIS},
are given by the sums of the rows:
\[
 B_n=B_n(1)=Y_n(1,\dots,1)=\sum^n_{k=0}\StirlingII{n}{k},\quad
 \sum^\infty_{n=0}B_n\frac{z^n}{n!}=e^{e^z-1}.
\]

\section{Burgers hierarchy}

The right hand sides of equations (\ref{pot-B}) do not contain $v$ and this
makes the substitution $u=v_1$ possible. This brings to the Burgers hierarchy
\begin{equation}\label{B}
 u_{t_n}=D(Y_n(u,\dots,u_{n-1})),\quad n=1,2,\dots
\end{equation}
which is homogeneous with respect to the weight $w(u_j)=j+1$. Several first
equations are written down in the table \ref{t:B}. What is the combinatorial
interpretation in this case? This can be easily understood by the following
example, for $n=3$:

{\small
\[
\begin{array}{ccccccccc}
 v_3&3v_1v_2&v^3_1& &    &            &      &             & \\
 u_2& 3uu_1 & u^3 &\quad\stackrel{D}{\longrightarrow}\quad
                    & u_3&3uu_2       &3u^2_1& 3u^2u_1     &0u^4          \\[3pt]
 123& 1|23  &1|2|3& &1234&      1|234 &12|34 &      1|2|34 &\sout{1|2|3|4}\\
    & 2|13  &     & &    &      2|134 &13|24 &      1|3|24 &              \\
    & 3|12  &     & &    &      3|124 &14|23 &\sout{1|4|23}&              \\
    &       &     & &    &\sout{4|123}&      &      2|3|14 &              \\
    &       &     & &    &            &      &\sout{2|4|13}&              \\
    &       &     & &    &            &      &\sout{3|4|12}&
\end{array}
\]}

\noindent Certainly, renaming $v_j\to u_{j-1}$ does not change the combinatorics. The
differentiation amounts to appending the new element to all blocks in turn,
however, now we do not add it as a new block. Therefore, the partitions under
consideration are constructed as in theorem \ref{th:pot-B}, but we do not apply
the operation $M$ at the last step. As a result, all partitions $\Pi_n$ are
mapped onto some subset of partitions $\Pi_{n+1}$, namely, those partitions
where the element $n+1$ does not appear as a singleton. We arrive to the
following combinatorial interpretation of equations (\ref{B}).

\begin{table}
\caption{\label{t:B} Burgers hierarchy (weight $w(u_j)=j+1$).}
\begin{indented}
\item[]
\begin{eqnarray*}
\mr\\[-4pt]
 u_{t_1}= u_1\\
 u_{t_2}= u_2+2uu_1\\
 u_{t_3}= u_3+(3uu_2+3u^2_1)+3u^2u_1\\
 u_{t_4}= u_4+(4uu_3+10u_1u_2)+(6u^2u_2+12uu^2_1)+4u^3u_1\\
 u_{t_5}= u_5+(5uu_4+15u_1u_3+10u^2_2)+(10u^2u_3+50uu_1u_2+15u^3_1)\\
  \qquad\qquad +(10u^3u_2+30u^2u^2_1)+5u^4u_1\\
\mr
\end{eqnarray*}
\end{indented}
\end{table}

\begin{theorem}
In the Burgers hierarchy, the coefficient of the monomial
$u^{k_0}u^{k_1}_1\dots u^{k_r}_r$ is equal to the number of partitions of the
set with one distinguished element into $k_0$ blocks with 1 element, \dots,
$k_r$ blocks with $(r+1)$ element and such that the distinguished element does
not constitute 1-block.
\end{theorem}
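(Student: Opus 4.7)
The plan is to parallel the proof of Theorem~\ref{th:pot-B} and show that $u_{t_n}=D(Y_n(u,u_1,\dots,u_{n-1}))$ is the generating polynomial for partitions of $[n+1]$ in which the element $n+1$ lies in a block of size at least two; this $n+1$ will play the role of the distinguished element of the statement. Under the renaming $v_j\to u_{j-1}$ forced by $u=v_1$, the monomial $v_1^{k_1}\cdots v_r^{k_r}$ of $Y_n$ becomes $u^{k_1}u_1^{k_2}\cdots u_{r-1}^{k_r}$, so a factor $u_i$ now encodes a block of size $i+1$. Theorem~\ref{th:pot-B} therefore rewrites as $Y_n(u,u_1,\dots,u_{n-1})=\sum_{\pi\in\Pi_n}p(\pi)$, where $p(\pi)=u^{k_0}u_1^{k_1}\cdots u_r^{k_r}$ records the block-size multiplicities of $\pi$ in this new convention.

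Next I would analyse the effect of $D$. By the Leibniz rule, differentiation of $p(\pi)$ replaces each factor $u_i$ in turn by $u_{i+1}$, counted with multiplicity; in the partition picture this is exactly the result of applying the operations $d_1,\dots,d_{|\pi|}$ from the proof of Theorem~\ref{th:pot-B} to $\pi$, i.e.\ appending the new element $n+1$ to each existing block in turn. The crucial point is that the operation $M$, which would insert $n+1$ as a fresh singleton and correspond (after the renaming) to multiplication by $u$, is \emph{absent} in the action of $D$. Hence
\[
 D(Y_n)=\sum_{\pi\in\Pi_n}\sum_{j=1}^{|\pi|}p(d_j(\pi)).
\]

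Finally, I would observe that the assignment $(\pi,j)\mapsto d_j(\pi)$ is a bijection from the indexing set on the right-hand side onto the collection of partitions of $[n+1]$ in which $n+1$ does not constitute a singleton: the preimage of such a partition is recovered by deleting $n+1$ from its block. Marking $n+1$ as the distinguished element then turns the exponent pattern $u^{k_0}u_1^{k_1}\cdots u_r^{k_r}$ into the block-size count of the statement. The only point requiring care is the identification of the forbidden configuration ``$n+1$ is a singleton'' with the absent construction move $M$; but since $M$ is the unique move that can produce a singleton containing the freshly added element, this identification is immediate, and I do not anticipate any genuine obstacle beyond the bookkeeping of the index shift $v_j\leftrightarrow u_{j-1}$.
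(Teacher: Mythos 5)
Your proposal is correct and follows essentially the same route as the paper: the paper's own justification (given in the paragraph preceding the theorem) likewise renames $v_j\to u_{j-1}$, identifies $D$ with applying the block-appending operations $d_j$ without the singleton-creating operation $M$, and concludes that $D(Y_n)$ enumerates exactly the partitions of $[n+1]$ in which $n+1$ is not a singleton. Your explicit remark that $(\pi,j)\mapsto d_j(\pi)$ is a bijection, with inverse given by deleting $n+1$ from its block, is a slightly more careful spelling-out of the same argument.
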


As before, one can consider more rough statistics. For instance, setting $u=1$
gives us the total number of partitions under consideration of the set $[n+1]$:
\[
 D(Y_n(u,\dots,u_{n-1}))|_{u_j=1}=B'_n(1)=\sum^n_{k=1}k\StirlingII{n}{k},
 \quad n\ge1.
\]
The sequence of these numbers ({\em 2-Bell numbers}) starts

{\small
\[
 1,~ 3,~ 10,~ 37,~ 151,~ 674,~ 3263,~ 17007,~ 94828,~ 562595,~ \dots
\]}

\noindent
According to \cite[A005493]{OEIS}, it can be characterized also in many other
ways, in particular, as the number of partitions of $[n]$ with distinguished
block or as the total number of blocks in all set partitions of $[n]$. These
interpretations are obvious as well, since the distinguished blocks can be
identified with the blocks enlarged by the operations $d_j$, and these
operations are applied exactly as many times as there are blocks in all
partitions.

\section{Ibragimov--Shabat hierarchy}

\subsection{Recurrent relations}

The table \ref{t:IS} displays the sequence of point changes and substitutions
between equation $\psi_{t_3}=\psi_3$ and the Ibragimov--Shabat equation
\cite{Ibragimov_Shabat_1980}
\[
 u_{t_3}=u_3+3u^2u_2+9uu^2_1+3u^4u_1.
\]
Although this transformation looks quite harmless, it partially destroys the
symmetry algebra: in the variables $\psi$, it consists of equations
(\ref{heat}) of arbitrary order, while only odd order equations survive in the
variables $u$. Indeed, the change $\psi^2=s$ brings to equation $s_{t_n}=\dots$
where the right hand side is a full derivative only if $n$ is odd:
\begin{equation}\label{stn}
 s_{t_n}=2\psi\psi_n
  =D\bigl(2\psi\psi_{n-1}-2\psi_1\psi_{n-2}+2\psi_2\psi_{n-3}
   +\dots\pm\psi^2_{(n-1)/2}\bigr).
\end{equation}
In the analogous equation for even $n$ the term $\psi^2_{n/2}$ remains outside
the parentheses, that is, $s_{t_n}\not\in\mathop{\rm Im}D$, and therefore the
further substitution $s=q_1$ leads out of the class of evolutionary equations.
The structure of odd flows is described by the following statement.

\begin{table}
\caption{\label{t:IS} Linearization of the Ibragimov--Shabat equation.}
\normalsize
\[
\begin{array}{ccc}
 \mr
 \psi_{t_3}=\psi_3 & &
 u_{t_3}=u_3+3u^2u_2+9uu^2_1+3u^4u_1\\[5pt]
 \updownarrow \psi^2=s && \updownarrow u^2=v \\[5pt]
 s_{t_3}=D\left(s_2-\frac{3s^2_1}{4s}\right) &&
 v_{t_3}=D\left(v_2-\frac{3v^2_1}{4v}+3vv_1+v^3\right)\\[5pt]
 \uparrow s=q_1 && \uparrow v=w_1 \\[5pt]
 q_{t_3}=q_3-\frac{3q^2_2}{4q_1} &
  \stackrel{q=e^{2w}}{\leftarrow\!\!\!-\!\!-\!\!\!\rightarrow} &
 w_{t_3}=w_3-\frac{3w^2_2}{4w_1}+3w_1w_2+w^3_1\\
\mr
\end{array}
\]
\end{table}

\begin{statement}
Let us denote $D_t=\partial_{t_1}+z^2\partial_{t_3}+z^4\partial_{t_5}+\dots$,
$A=A(z)=a_0+a_1z+a_2z^2+\dots$, $\bar A=A(-z)$, then the Ibragimov--Shabat
hierarchy is equivalent to equations
\begin{eqnarray}
\label{ut}
 D_t(u)=\frac{1}{2u}D(A\bar A)=\frac{1}{2z}(A-\bar A)-uA\bar A,\\
\label{A}
 z(D+u^2)(A)=A-u.
\end{eqnarray}
\end{statement}

\begin{proof}
Let us consider the generating function
\[
 \Psi=\psi+\psi_1z+\psi_2z^2+\dots
\]
and set $\Psi=\sqrt{2}e^wA$. Equation (\ref{A}) follows from the relations
\[
 zD(\Psi)=\Psi-\psi,\qquad \psi=\sqrt{q_1}=\sqrt{2e^{2w}w_1}=\sqrt{2}e^wu.
\]
Next, let $\bar\Psi=\Psi(-z)$, then (cf (\ref{stn}))
\begin{eqnarray*}
 D(\Psi\bar\Psi)&=z^{-1}(\Psi-\psi)\bar\Psi-z^{-1}\Psi(\bar\Psi-\psi)\\
  &=z^{-1}\psi(\Psi-\bar\Psi)
  =2\psi(\psi_1+\psi_3z^2+\dots)=2\psi D_t(\psi)=D_t(s).
\end{eqnarray*}
Applying $D^{-1}$ yields $\Psi\bar\Psi=D_t(q)=2e^{2w}D_t(w)$, wherefrom
\[
 2uD_t(u)=D_t(v)=DD_t(w)=\frac{1}{2}D(e^{-2w}\Psi\bar\Psi)=D(A\bar A).
\]
Second equality in (\ref{ut}) follows after elimination of derivatives by use
of (\ref{A}).
\end{proof}

Equation (\ref{A}) is equivalent to recurrent relations
\begin{equation}\label{an}
 a_0=u,\quad a_n=a_n(u,\dots,u_n)=(D+u^2)(a_{n-1}),\quad n=1,2,\dots
\end{equation}
which are our object of study. Let us try to find a combinatorial
interpretation for this recursion.

\begin{table}
\caption{\label{t:an} Polynomials $a_n$ (weight $w(u_j)=2j+1$).}
\begin{indented}
\item[]
\begin{eqnarray*}
\mr\\[-4pt]
\fl a_0= u \\
\fl a_1= u_1+u^3\\
\fl a_2= u_2+4u^2u_1+u^5\\
\fl a_3= u_3+(5u^2u_2+8uu^2_1)+9u^4u_1+u^7\\
\fl a_4= u_4+(6u^2u_3+26uu_1u_2+8u^3_1)+(14u^4u_2+44u^3u^2_1)+16u^6u_1+u^9 \\
\fl a_5= u_5+(7u^2u_4+38uu_1u_3+26uu^2_2+50u^2_1u_2)
 +(20u^4u_3+170u^3u_1u_2+140u^2u^3_1)\\
 +(30u^6u_2+140u^5u^2_1)+25u^8u_1+u^{11}\\
\mr
\end{eqnarray*}
\end{indented}
\end{table}

Several first polynomials $a_n$ are presented in the table \ref{t:an}. Given
these data as a prescribed statistics, our goal is to figure out a definition
of the corresponding combinatorial objects, that is, to solve a kind of inverse
problem of the enumerative combinatorics. In contrast to the Burgers hierarchy
case, here we do not know an explicit formula for the coefficients, but this is
not too important, the main problem is to guess what are the objects which we
are counting. An invaluable aid in such an ill-posed problem may be obtained by
comparison with the known data collected in the Encyclopedia of integer
sequences \cite{OEIS}. Let us pass to the less detailed statistics by gluing
together terms of the same degree. Polynomials of one variable
$a_n(u,\dots,u)=(u\partial_u+u^2)^n(u)$ contain only odd powers of $u$ and
their coefficients constitute the triangle

{\small
\[
 \begin{array}{llllllllll}
 1 &     &     &      &     &    &   &  && 1\\
 1 & 1   &     &      &     &    &   &  && 2\\
 1 & 4   & 1   &      &     &    &   &  && 6\\
 1 & 13  & 9   & 1    &     &    &   &  && 24\\
 1 & 40  & 58  & 16   & 1   &    &   &  && 116\\
 1 & 121 & 330 & 170  & 25  & 1  &   &  && 648\\
 1 & 364 & 1771& 1520 & 395 & 36 & 1 &  && 4088\\
 1 & 1093& 9219& 12411& 5075& 791& 49& 1&& 28640
 \end{array}
\]}

\noindent
which turns out to be known: according to \cite[A039755]{OEIS} this is the
triangle of analogs of Stirling numbers of the second kind for the so-called
$B$ type set partitions. The sums of numbers in rows, that is, the total sums
of the coefficients $a_n(1,\dots,1)$ form the sequence \cite[A007405]{OEIS} of
the {\em Dowling numbers}, or {\em $B$-analogs of the Bell numbers}. This gives
us a broad hint at a possible connection between polynomials (\ref{an}) and $B$
type partitions. This guess is proved in the next section.

\subsection{Generating operations for type $B$ set partitions}

Special classes of set partitions appear when one takes into account some
additional structure of the set. Set partitions of $B$ type (or signed set
partitions, $\mathbb Z_2$-partitions) \cite{Dowling_1973}, see also e.g.
\cite{Benoumhani_1996, Suter_2000} make use of the reflection $j\to-j$.

\begin{definition}
A partition $\pi$ of the set $\{-n,\dots,n\}$ is called the $B_n$ type
partition if:

1) $\pi=-\pi$, that is for each block $\beta\in\pi$ also $-\beta\in\pi$;

2) $\pi$ contains only one block $\pi_0\in\pi$ such that $\pi_0=-\pi_0$.

We will denote $\Pi^B_n$ the set of all such partitions and $\Pi^B_{n,k}$ those
partitions which contain $k$ block pairs.
\end{definition}

In a brief notation for $B$ type partitions, the negative elements of the
$0$-block are omitted, and only that block of each pair is displayed for which
the element with minimal absolute value is positive; the minus signs are
denoted by over bars. For instance, in this notation the partition
$-5,-4|{-3,0,3}|{-2,1}|{-1,2}|4,5$ is represented as $03|1\bar 2|45$. A
graphical representation is clear from the diagram
\begin{center}
\begin{picture}(20,5.5)(-5,-2.5)
 \path(-5,-2)(-4,-2)\ELB(-5,-2)\ELB(-4,-2)
 \path(5,2)(4,2)    \ELB(5,2)\ELB(4,2)
 \path(-2,-1)(1,-1) \ELB(-2,-1)\ELB(1,-1)
 \path(2,1)(-1,1)   \ELB(2,1)\ELB(-1,1)
 \path(-3,0)(3,0)\ELB(-3,0)\ELB(0,0)\ELB(3,0)
 \put(7,-0.25){\large $\to$}
 \put(10,0){
 \path(5,2)(4,2)\ELB(5,2)\ELB(4,2)
 \path(1,1)(2,1)\ELB(1,1)\ELW(2,1)
 \path(0,0)(3,0)\ELB(0,0)\ELB(3,0)}
\end{picture}
\end{center}

Now let us define the map $p$ from $\Pi^B_n$ into the set of monomials on the
variables $u_j$. Let $|\beta|$ denote the number of positive elements in the
block $\beta$:
\[
 |\beta|=\#\{i\in\beta:i>0\}.
\]
It is clear that the number of negative elements in the block is $|\bar\beta|$.
Let a set partition $\pi\in\Pi^B_{n,k}$ consists of 0-block $\pi_0$ and block
pairs $\pi_1,\bar\pi_1$, \dots, $\pi_k,\bar\pi_k$, such that the element of
$\pi_j$ with minimal absolute value is positive. For such a partition, let
\[
 p(\pi)=u_{|\pi_0|}\cdot u_{|\pi_1|-1}u_{|\bar\pi_1|}\cdots
  u_{|\pi_k|-1}u_{|\bar\pi_k|}.
\]
As an example, let us write down $\Pi^B_3$ partitions, collecting together all
partitions corresponding to the same monomial:

{\small
\[
\begin{array}[t]{ccccc}
 u_3  & 5u^2u_2       & 8uu^2_1   & 9u^4u_1    & u^7     \\[5pt]
 0123 & 0|123         & 0|12\bar3 & 0|12|3     & 1|2|3|4 \\
      & 0|1\bar2\bar3 & 0|1\bar23 & 0|1\bar2|3 &         \\
      & 012|3         & 01|23     & 0|13|2     &         \\
      & 013|2         & 01|2\bar3 & 0|1\bar3|2 &         \\
      & 023|1         & 02|13     & 0|23|1     &         \\
      &               & 02|1\bar3 & 0|2\bar3|1 &         \\
      &               & 03|12     & 01|2|3     &         \\
      &               & 03|1\bar2 & 02|1|3     &         \\
      &               &           & 03|1|2     &
\end{array}
\]}

\noindent
The resulting polynomial is exactly $a_3$. The following theorem demonstrates
that this is not just a coincidence and the polynomials $a_n$ are, indeed, the
$\mathbb Z_2$-analogs of the full exponential Bell polynomials $Y_n$.

\begin{theorem}
The polynomials (\ref{an}) are equal to
\[
 a_n=\sum_{\pi\in\Pi^B_n}p(\pi).
\]
\end{theorem}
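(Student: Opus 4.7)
The plan is to mimic the inductive proof of Theorem~\ref{th:pot-B}. Set $A_n := \sum_{\pi \in \Pi^B_n} p(\pi)$. The base case is immediate: $\Pi^B_0$ consists of the single partition $\{\{0\}\}$, whose 0-block has no positive element, so $p=u_0=u=a_0$. In view of the recurrence (\ref{an}), it will then suffice to establish the companion recurrence $A_{n+1}=(D+u^2)(A_n)$, and the theorem follows by induction on $n$.

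To this end, I introduce generating operations $\Pi^B_n \to \Pi^B_{n+1}$ analogous to the $d_j, M$ of Theorem~\ref{th:pot-B}. Given $\pi\in\Pi^B_n$ with 0-block $\pi_0$ and block pairs $(\pi_j,\bar\pi_j)$, $j=1,\dots,k$, define: $d_0(\pi)$ inserts $\pm(n+1)$ into $\pi_0$; $d_j^+(\pi)$ (resp.\ $d_j^-(\pi)$) inserts $n+1$ into $\pi_j$ and $-(n+1)$ into $\bar\pi_j$ (resp.\ the opposite); and $M(\pi)$ adjoins the new singleton pair $\{n+1\},\{-(n+1)\}$. Reading off how $\pm(n+1)$ sits inside a given $\pi'\in\Pi^B_{n+1}$ uniquely identifies both the preimage $\pi\in\Pi^B_n$ and the single operation that produces $\pi'$, so these operations partition $\Pi^B_{n+1}$ exactly once.

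The key step is to match these operations to the action of $D+u^2$ on $p(\pi)= u_{|\pi_0|}\prod_{j=1}^k u_{|\pi_j|-1}u_{|\bar\pi_j|}$. Here each monomial factor corresponds bijectively to one block of the partition (with the convention that $\pi_j$ contributes $u_{|\pi_j|-1}$ and $\bar\pi_j$ contributes $u_{|\bar\pi_j|}$). Operation $d_0$ raises $|\pi_0|$ by one, replacing $u_{|\pi_0|}$ by $u_{|\pi_0|+1}$; operations $d_j^+$ and $d_j^-$ increment $|\pi_j|$ and $|\bar\pi_j|$ respectively, replacing the corresponding factor $u_i$ by $u_{i+1}$. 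Thus each Leibniz term of $D(p(\pi))$ equals $p$ of exactly one $d$-image, and vice versa, giving $\sum d\text{-images}\ p = D(p(\pi))$. The operation $M$ attaches a new block pair with $|\pi_{k+1}|=1$, $|\bar\pi_{k+1}|=0$, thereby multiplying $p(\pi)$ by $u_0 u_0=u^2$. Summing over $\pi\in\Pi^B_n$ produces $A_{n+1}=(D+u^2)(A_n)$, which closes the induction.

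The main obstacle is the bookkeeping around the distinguished-block convention: after applying $d_j^\pm$, one must check that the block retaining positive minimum absolute value is the same as before, so that the factor assignment in $p$ remains consistent between $\pi$ and its image. This is automatic because $n+1$ exceeds every previously used absolute value, leaving the minimum-absolute-value element of each pair untouched. Once this labeling point is settled, the correspondence between Leibniz differentiation and the block-extension operations is a direct transcription of the argument in the Bell polynomial case.
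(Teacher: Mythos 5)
Your proposal is correct and follows essentially the same route as the paper: the same four generating operations (your $d_0, d_j^{+}, d_j^{-}, M$ are the paper's $d_0, d_j, \bar d_j, M$), the same tracking of how each operation shifts one factor of $p(\pi)$, and the same induction via $A_{n+1}=(D+u^2)(A_n)$. Your extra remarks on the uniqueness of the decomposition and on why the minimal-absolute-value labeling is preserved are accurate refinements of points the paper treats more briefly.
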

\begin{proof}
Let us denote the sum in the right hand side $p_n$. Obviously, $p_0=u=a_0$, so
we only have to prove that $p_n$ satisfy the same recurrent relations as $a_n$,
that is, $p_n=(D+u^2)(p_{n-1})$.

Notice, that deleting of elements $\pm n$ from any $B_n$ type set partition
gives us a $B_{n-1}$ type set partition. Therefore, $\Pi^B_n$ is constructed
from $\Pi^B_{n-1}$ by adding $\pm n$ in all possible ways. It is easy to see
that this is done by the following operations:
\begin{list}{}{}
\item $d_0:\Pi^B_{n-1,k}\to\Pi^B_{n,k}$,
      insertion of both elements $\pm n$ into 0-block;
\item $d_j:\Pi^B_{n-1,k}\to\Pi^B_{n,k}$, $j=1,\dots,k$,
      insertion of $\pm n$ into blocks $\pm\pi_j$;
\item $\bar d_j:\Pi^B_{n-1,k}\to\Pi^B_{n,k}$, $j=1,\dots,k$,
      insertion of $\pm n$ into blocks $\mp\pi_j$;
\item $M:\Pi^B_{n-1,k}\to\Pi^B_{n,k+1}$,
      adding of the new block pair $\{-n\}$, $\{n\}$.
\end{list}
Starting from the trivial partition of the set $\{0\}$ and applying these
operations, one can obtain, in a unique way, any $B$ type set partition. Let us
keep track of the monomial $p(\pi)$, $\pi\in\Pi^B_{n-1,k}$ under these
operations:
\begin{list}{}{}
\item $d_0:$ the factor $u_{|\pi_0|}$ is replaced with $u_{|\pi_0|+1}$;
\item $d_j:$ the factor $u_{|\pi_j|-1}$ is replaced with $u_{|\pi_j|}$;
\item $\bar d_j:$ the factor $u_{|\bar\pi_j|}$
      is replaced with $u_{|\bar\pi_j|+1}$;
\item $M:$ two new factors $u$ are added.
\end{list}
Therefore, application of all possible operations maps the monomial $p(\pi)$ to
the sum of monomials $(D+u^2)(p(\pi))$.
\end{proof}

\section{Korteweg--de Vries hierarchy}

\subsection{Recurrent relations}

Let us recall (for a proof, see e.g. \cite{Gelfand_Dikii_1975}) a computation
method of the KdV conservation laws and flows, based on solving of the Riccati
equation
\begin{equation}\label{Df}
 D(f)+f^2=\lambda-u,\quad \lambda=z^2/4
\end{equation}
by the formal power series
\[
 f(z)=-\frac{z}{2}+\frac{f_1(u)}{z}+\frac{f_2(u,u_1)}{z^2}
  +\dots +\frac{f_n(u,\dots,u_{n-1})}{z^n}+\cdots
\]
Equation (\ref{Df}) is equivalent to the recurrent relations
\begin{equation}\label{fn}
 f_1=u,\quad
 f_{n+1}=D(f_n)+\sum^{n-1}_{s=1}f_sf_{n-s},\quad n=1,2,\dots
\end{equation}
which will be the main object of our study. Several polynomials $f_n$ are
written down in the table \ref{t:KdV}. The flows are computed from the
polynomials with odd subscripts: let
\[
 g(z)= -\frac{1}{2z}-\frac{g_1}{z^3}-\frac{g_3}{z^5}-\dots
   -\frac{g_{2m-1}}{z^{2m+1}}-\cdots= \frac{1}{2(f(z)-f(-z))}
\]
which is equivalent to recurrent relations
\[
 g_1=u,\quad
 g_{2m+1}=f_{2m+1}+2\sum^m_{s=1}g_{2s-1}f_{2m-2s+1},\quad m=1,2,\dots
\]
then the KdV hierarchy reads
\[
 u_{t_{2m+1}}=D(g_{2m+1})=u_{2m+1}+\dots,\quad m=0,1,2,\dots
\]
Moreover, polynomials (\ref{fn}) with odd subscripts serve as common conserved
densities for all these flows.

\begin{table}
\caption{\label{t:KdV}Polynomials $f_n$ (weight $w(u_j)=j+2$).}
\begin{indented}\item[]
\begin{eqnarray*}
\mr\\[-4pt]
 f_1= u\\
 f_2= u_1\\
 f_3= u_2+u^2\\
 f_4= u_3+4uu_1\\
 f_5= u_4+(6uu_2+5u^2_1)+2u^3\\
 f_6= u_5+(8uu_3+18u_1u_2)+16u^2u_1\\
 f_7= u_6+(10uu_4+28u_1u_3+19u^2_2)+(30u^2u_2+50uu^2_1)+5u^4\\
\mr
\end{eqnarray*}
\end{indented}
\end{table}

One interpretation of the polynomials $f_n$ can be seen intermediately from the
recurrent relations (\ref{fn}). Let us consider expressions $\varphi$ builded
from the variable $u$ and operations $M(a,b)$, $d_j(a)$, $1\le j\le\deg a$
where $\deg a$ is equal to the number of instances of $u$ in $a$. Such
expressions can be called `unexpanded monomials'. For any expression $\varphi$
its value $\expand(\varphi)$ is computed according to the following rules:
\begin{list}{}{}
\item[---] independently on the order of operations, all $d_j$ are applied before $M$;
\item[---] the action of $d_j(a)$ amounts to replacing of $j$-th instance of $u_i$
      in $a$ with $u_{i+1}$ ($u$ is identified with $u_0$, as usual);
\item[---] $M(a,b)$ is replaced by the product $ab$.
\end{list}
Let $\Phi_n$ denote the set of all expressions with the total number of symbols
$u,d,M$ equal to $n$. For instance:

{\small
\[
\begin{array}{lllll}
\br
     && \mbox{unexpanded monomials} && \mbox{expanded monomials}\\
\mr
 n=1 && u                           && u          \\
 n=2 && d_1(u)                      && u_1        \\
 n=3 && d_1(d_1(u)),~~ M(u,u )      && u_2,~~u^2  \\
 n=4 && d_1(d_1(d_1(u))),           && u_3,       \\
     && d_1(M(u,u)),~~ d_2(M(u,u))  && uu_1,~uu_1 \\
     && M(d_1(u),u),~~ M(u,d_1(u))  && uu_1,~uu_1 \\
\br
\end{array}
\]}

\begin{theorem}\label{th:KdV1}
The number of different expressions builded from symbols $M,d_j,u$ with the
same monomial as their value is equal to the coefficient of this monomial in
polynomials $f_n$. In other words,
\begin{equation}\label{fphi}
 f_n=\sum_{\varphi\in\Phi_n}\expand(\varphi).
\end{equation}
\end{theorem}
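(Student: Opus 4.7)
The plan is to prove the identity by induction on $n$, exploiting the two-part recursion (\ref{fn}) and mirroring it by a decomposition of $\Phi_n$ according to the outermost symbol of an expression. The base case $n=1$ is immediate: $\Phi_1=\{u\}$, $\expand(u)=u=f_1$. For the inductive step, set $F_n=\sum_{\varphi\in\Phi_n}\expand(\varphi)$ and assume $F_m=f_m$ for all $m\le n$; the goal is to show $F_{n+1}=D(F_n)+\sum_{s=1}^{n-1}F_sF_{n-s}$.

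The first step is the structural decomposition of $\Phi_{n+1}$. Each expression $\varphi\in\Phi_{n+1}$ has a unique outermost symbol, either $d_j$ or $M$. If it is $d_j$, then $\varphi=d_j(\varphi')$ with $\varphi'\in\Phi_n$ and $1\le j\le\deg\varphi'$; conversely any such pair yields an element of $\Phi_{n+1}$. If it is $M$, then $\varphi=M(\varphi_1,\varphi_2)$ and counting symbols gives $|\varphi_1|+|\varphi_2|=n$, so $(\varphi_1,\varphi_2)\in\Phi_s\times\Phi_{n-s}$ for a unique $s\in\{1,\dots,n-1\}$. These two families are disjoint and exhaust $\Phi_{n+1}$.

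The second step is to match each family with one term of the recursion. For the $M$-family, the expansion rule gives $\expand(M(\varphi_1,\varphi_2))=\expand(\varphi_1)\expand(\varphi_2)$, so summing yields $\sum_{s=1}^{n-1}F_sF_{n-s}$ directly. For the $d_j$-family, the key identity is
\[
 \sum_{j=1}^{\deg\varphi'}\expand(d_j(\varphi'))=D(\expand(\varphi')),
\]
which follows from the Leibnitz rule: $\expand(\varphi')$ is a monomial $u_{i_1}\cdots u_{i_{\deg\varphi'}}$ (with repetitions), and applying $D$ by the product rule replaces exactly one factor $u_{i_k}$ by $u_{i_k+1}$ at a time, which is precisely what $d_j$ does on the expression level (by definition, $d_j$ promotes the $j$th instance of $u_i$ in $\varphi'$). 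Summing over $\varphi'\in\Phi_n$ therefore produces $D(F_n)$.

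Combining the two contributions with the inductive hypothesis gives $F_{n+1}=D(f_n)+\sum_{s=1}^{n-1}f_sf_{n-s}=f_{n+1}$, closing the induction. The only non-routine point is the correspondence between $d_j$ and the Leibnitz action of $D$; the subtlety is that $\deg\varphi'$ equals the number of $u$-symbols in $\varphi'$, which is preserved by every $d_j$ but incremented by $M$, and matches the total degree of $\expand(\varphi')$, so the range of $j$ is exactly right for the product rule to apply without double-counting.
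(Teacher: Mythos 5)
Your proof is correct and follows essentially the same route as the paper's: decompose $\Phi_{n+1}$ by the outermost symbol ($d_j$ versus $M$), use the Leibnitz identity $\sum_{j=1}^{\deg a}\expand(d_j(a))=D(\expand(a))$ together with multiplicativity of $\expand$ under $M$, and conclude that the sum satisfies the recursion (\ref{fn}). The paper states these two properties as ``obvious'' without spelling out the induction, so your version is simply a more explicit write-up of the same argument.
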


\begin{proof}
Any expression from $\Phi_{n+1}$, $n>0$ is either of the form $d_j(a)$ where
$a\in\Phi_n$, $1\le j\le\deg a$ or of the from $M(a,b)$ where $a\in\Phi_s$,
$b\in\Phi_{n-s}$. Taking into account the obvious properties
\begin{eqnarray*}
 \sum^{\deg a}_{j=1}\expand(d_j(a))=D(\expand(a)),\\
 \expand(M(a,b))=\expand(a)\expand(b),
\end{eqnarray*}
this implies that polynomials (\ref{fphi}) satisfy the recurrent relation
(\ref{fn}).
\end{proof}

This interpretation is fairly intuitive, but it is desirable to compare it with
something more standard. As before, let us pass to polynomials of one variable
by collecting together terms of the same degrees. This brings us to a number
triangle which apparently is not in the OEIS:

\begin{equation}\label{fu}
\small
\begin{array}{llllllll}
 1 &      &       &       &      &    && 1 \\
 1 &      &       &       &      &    && 1 \\
 1 & 1    &       &       &      &    && 2 \\
 1 & 4    &       &       &      &    && 5 \\
 1 & 11   & 2     &       &      &    && 14 \\
 1 & 26   & 16    &       &      &    && 43 \\
 1 & 57   & 80    & 5     &      &    && 143 \\
 1 & 120  & 324   & 64    &      &    && 509 \\
 1 & 247  & 1170  & 490   & 14   &    && 1922 \\
 1 & 502  & 3948  & 2944  & 256  &    && 7651 \\
 1 & 1013 & 12776 & 15403 & 2730 & 42 && 31965
\end{array}
\end{equation}

\noindent
Nevertheless, the sequence of coefficients sum totals turns out to be known:
$f_{n+1}[1]$ is equal to the number of {\em non-overlapping partitions} of the
set $[n]$, or the {\em Bessel number} $B^*_n$ \cite[A006789]{OEIS}. Notice,
that identifying of all $u_j$ results in the Riccati equation
$u\partial_u(f)+f^2=\lambda-u$ which is, indeed, equivalent to the Bessel
equation. Moreover, one can see in the triangle the Euler numbers
\cite[A000295]{OEIS}, the Catalan numbers \cite[A000108]{OEIS} and powers of 4.

\subsection{Generating operations for non-overlapping partitions}

This class of set partitions was introduced in \cite{Flajolet_Schott_1990}, see
also \cite{Claesson_2001,Klazar_2003}. Its definition engages the order
relation on the partitioned set $[n]=\{1,\dots,n\}$.

\begin{definition}
Blocks $\alpha$ and $\beta$ of a set partition $\pi$ overlap if
\[
 \min\alpha<\min\beta<\max\alpha<\max\beta.
\]
The set partition is called non-overlapping (NOP) if any two blocks in it do
not overlap. All NOPs of the set $[n]$ will be denoted $\Pi^*_n$.
\end{definition}

The interval $[\min\alpha,\max\alpha]$ is called the {\em support} of the block
$\alpha$. The above definition of NOP is equivalent to the property that
supports of any two blocks either do not intersect or lie one in another. The
left diagram below shows overlapping blocks and the right diagram shows
non-overlapping ones:
\begin{center}
\begin{minipage}[t]{10pc}
\begin{picture}(9,3)(0,-0.7)
 \path(0,0)(7,0)  \ELB(0,0)\ELB(2,0)\ELB(6,0)\ELB(7,0)
 \path(3,1)(9,1) \ELB(3,1)\ELB(5,1)\ELB(9,1)
\end{picture}
\end{minipage}
\qquad\qquad
\begin{minipage}[t]{10pc}
\begin{picture}(9,3)(0,-0.7)
 \path(0,0)(9,0) \ELB(0,0)\ELB(2,0)\ELB(9,0)
 \path(1,1)(5,1) \ELB(1,1)\ELB(4,1)\ELB(5,1)
 \path(6,2)(8,2) \ELB(6,2)\ELB(8,2)
\end{picture}
\end{minipage}
\end{center}

\begin{remark}
A neighbour class of {\em non-crossing} partitions is characterized by a more
restrictive condition which forbids the pattern
$\alpha_1<\beta_1<\alpha_2<\beta_2$ for any elements of any two blocks. It is
under active study in combinatorics as well, moreover, it makes sense to
combine such types of restrictions with symmetries like the reflection for the
$B$ type partitions, see e.g. \cite{Reiner_1997}. It is an open question,
whether some integrable hierarchies may be associated with such kind of
objects.
\end{remark}

Some simple properties of NOPs are the following.

--- At $n=0,1,2,3$ we have $\Pi^*_n=\Pi_n$ and there is only one overlapping
partition $13|24$ in $\Pi_4$,.

--- Singletons do not overlap with any block.

--- NOPs containing only doublets can be easily identified with the balanced
sets of parentheses:
\begin{center}
\begin{picture}(20,4)(0,-0.5)
 \path(0,0)(5,0)\ELB(0,0)\ELB(5,0)
 \path(1,1)(2,1)\ELB(1,1)\ELB(2,1)
 \path(3,2)(4,2)\ELB(3,2)\ELB(4,2)
 \path(6,3)(7,3)\ELB(6,3)\ELB(7,3)
 \put(10,1){\large $\to\qquad (~(~)~(~)~)~(~)$}
\end{picture}
\end{center}

The last property explains where from the Catalan numbers appear in the
triangle (\ref{fu}). The deletion of differentiation in equation (\ref{fn})
brings to the recursion for the `dispersionless terms':
\[
 f_1=u,~~ f_{n+1}=\sum^{n-1}_{s=1}f_sf_{n-s}
 ~~\to~~ u,0,u^2,0,2u^3,0,5u^4,0,~\dots
\]

In order to establish a correspondence with the general polynomials $f_n$, let
us identify the variable $u$ with the set partition $\{\varnothing\}$ and
define the action of the operations $M$ and $d_j$ on the NOPs, in such a way
that expressions $\Phi_{n+1}$ be in a one-to-one correspondence with $\Pi^*_n$.
\smallskip

{\em Degree.} Let $\deg\pi=k$ if $\pi$ contains $k-1$ multiplets (blocks with
more than one element).
\smallskip

{\em Operation $M$.} Let $\rho\in\Pi^*_r$, $\sigma\in\Pi^*_s$. Denote by
$(\sigma)_{r+1}$ the partition of the set $\{r+2,r+s+1\}$ obtained from
$\sigma$ by adding $r+1$ to each element, and define
\[
 M(\rho,\sigma)=\rho\cup\{\{r+1,r+s+2\}\}\cup(\sigma)_{r+1}\in\Pi^*_{r+s+2}.
\]
This can be illustrated by the diagram
\begin{center}
\begin{picture}(20,5.5)(0,-1)
\path(0,0)(1,0)\ELB(0,0)\ELB(1,0)\ELB(2,1)%
\put(3,0){\Large$\boldsymbol\times$}
\put(5.5,0){\path(0,0)(3,0)\ELB(0,0)\ELB(2,0)\ELB(3,0)\ELB(1,1)}%
\put(10,0){\Large$\boldsymbol=$}
\put(13,0){
\path(0,0)(1,0)\ELB(0,0)\ELB(1,0)\ELB(2,1)
\path(3,2)(8,2)\ELB(3,2)\ELB(8,2)
\path(4,3)(7,3)\ELB(4,3)\ELB(6,3)\ELB(7,3)\ELB(5,4)}
\end{picture}
\end{center}
In particular, if $\rho=\{\varnothing\}$ then $(\sigma)_1$ is bounded by the
doublet $\{1,s+2\}$, and if $\sigma=\{\varnothing\}$ then the doublet
$\{r+1,r+2\}$ is appended to $\rho$. Notice that $\deg
M(\rho,\sigma)=\deg\rho\deg\sigma$.
\smallskip

{\em Operation $d_j$.} It consists of adding one element $n+1$ to
$\pi\in\Pi^*_n$. If $j=1$ then the element is added just as a singleton. For
$1<j\le k=\deg\pi$, the operation requires a detailed description. Let us
denote $\mu_2,\dots,\mu_k$ all multiplets in $\pi$, ordered by increase of
their minimal elements. Assume that all blocks with support containing $\mu_j$
are enumerated by a sequence $j_1<\dots<j_s=j$. Let us divide each of these
blocks into the left and right parts with respect to $m=\max\mu_j$:
\[
 \mu^-_{j_r}=\{i\in\mu_{j_r}: i<m\},\quad
 \mu^+_{j_r}=\{i\in\mu_{j_r}: i\ge m\}
\]
and form the new blocks
\[
 \tilde\mu_{j_1}=\mu^-_{j_1}\cup\{m,n+1\},\quad
 \tilde\mu_{j_r}=\mu^-_{j_r}\cup\mu^+_{j_{r-1}},~~r=2,\dots,s
\]
as shown on the following diagram. The rest blocks of the partition do not
change under this operation.
\begin{center}
\begin{picture}(15,9)(-1,-6)
{\thicklines\dashline[40]{0.4}(8,2)(13,2)%
  \dashline[40]{0.4}(8,-0.5)(8,2)}
\path(1,0)(12,0) \ELB(1,0)\ELB(3,0)\ELB(5,0)\ELB(10,0)\ELB(12,0)
\path(2,1)(11,1) \ELB(2,1)\ELB(7,1)\ELB(9,1)\ELB(11,1)
\path(4,2)(8,2)  \ELB(4,2)\ELB(6,2)\ELB(8,2)\ELB(13,2)
\put(7.8,2.6){$m$}\put(12.2,2.6){$n+1$}
\multiputlist(0,0)(0,1)[r]{$j_1$,$j_2$,$j=j_s$}%
\put(12.2,0.2){$\boldsymbol\uparrow$}
\put(11.2,1.2){$\boldsymbol\uparrow$}
\put(13.1,0.7){$\Big\downarrow$}
\put(7,-2){$\boldsymbol d_j\big\downarrow$}
\put(0,-5.5){{\thicklines\dashline[40]{0.4}(8,0)(8,2.5)}
\path(1,0)(13,0)\ELB(1,0)\ELB(3,0)\ELB(5,0)\ELB(8,0)\ELB(13,0)
\path(2,1)(12,1)\ELB(2,1)\ELB(7,1)\ELB(10,1)\ELB(12,1)
\path(4,2)(11,2)\ELB(4,2)\ELB(6,2)\ELB(9,2)\ELB(11,2)}
\end{picture}
\end{center}

\begin{theorem}
The operations $M$, $d_j$ generate any non-overlapping partition, in a unique
way.
\end{theorem}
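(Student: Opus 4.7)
The plan is to prove the theorem by induction on $n$, establishing that the map $\Phi_{n+1}\to\Pi^*_n$ from theorem~\ref{th:KdV1} is a bijection. The forward direction---that $M$ and $d_j$ preserve the non-overlap property---is a routine verification from the support-nesting characterization of NOPs, so the substance lies in the reverse direction: every $\pi\in\Pi^*_n$ has a unique generating expression. Given $\pi\in\Pi^*_n$ with $n\ge 1$, I would identify a canonical ``last operation'' from the structure of $\pi$ alone and appeal to the inductive hypothesis on the uniquely determined predecessor(s).

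The argument focuses on the block $\beta\ni n$; since $n$ is the maximum of the underlying set, automatically $n=\max\beta$. The case analysis hinges on $|\beta|$. If $|\beta|=1$, then $\pi=d_1(\pi\setminus\{n\})$. If $|\beta|=2$, say $\beta=\{b,n\}$, the non-overlap property forces every other block to have support either in $[1,b-1]$ or in $[b+1,n-1]$, so $\pi$ decomposes uniquely as $M(\rho,\sigma)$ with $\rho\in\Pi^*_{b-1}$ and $\sigma\in\Pi^*_{n-b-1}$. These two cases are disjoint from $|\beta|\ge 3$ because the block containing $n$ in any $d_j(\pi')$ with $j>1$ has the form $\tilde\mu_{j_1}=\mu^-_{j_1}\cup\{m,n\}$ of size at least $3$: indeed $\mu^-_{j_1}=\varnothing$ would give $\min\mu_{j_1}\ge m=\max\mu_j$, and together with $\min\mu_{j_1}\le\min\mu_j$ this collapses $\mu_j$ to the singleton $\{m\}$, contradicting that $\mu_j$ is a multiplet.

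The delicate case $|\beta|\ge 3$ is the main obstacle. Here $\pi$ must come from $d_j$ for some $j\ge 2$, and both $j$ and the predecessor $\pi'\in\Pi^*_{n-1}$ must be recoverable from $\pi$. Let $m$ be the second-largest element of $\beta$. The key claim is that the blocks of $\pi$ whose support contains the position $m$ form a support-nested chain $\beta=\beta_1\supset\beta_2\supset\dots\supset\beta_s$; this uses the NOP property together with the fact that $m$ lies only in $\beta$, which rules out (via $\max\gamma\le m$ and $m\in[\min\gamma,\max\gamma]$ forcing $m\in\gamma$) the possibility that another block $\gamma$ has support contained in that of $\beta$.

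Splitting each $\beta_r$ at $m$ into $\beta_r^-=\beta_r\cap[1,m-1]$ and $\beta_r^+=\beta_r\cap[m+1,n-1]$, the predecessor $\pi'$ is obtained by replacing $\beta_1,\dots,\beta_s$ with the rotated blocks $\mu_{j_r}=\beta_r^-\cup\beta_{r+1}^+$ for $r<s$ and $\mu_{j_s}=\beta_s^-\cup\{m\}$, deleting $n$, and keeping all other blocks of $\pi$ unchanged. Direct comparison with the definition of $d_j$ confirms this inverts the operation, and $j$ is read off as the rank of $\mu_{j_s}$ in the min-ordered list of multiplets of $\pi'$. That the resulting $\pi'$ again lies in $\Pi^*_{n-1}$ follows from the preserved nesting of the $\mu_{j_r}$-supports and the localization of each $\beta_r^+$ to $[m+1,n-1]$, which together guarantee compatibility with the unchanged blocks; the inductive hypothesis then yields the unique expression for $\pi'$, hence for $\pi$.
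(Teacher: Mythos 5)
Your proposal is correct and takes essentially the same approach as the paper: the last operation is read off from the block $\beta$ containing the maximal element (singleton $\to d_1$, doublet $\to M$, multiplet $\to d_j$ with $j\ge 2$) and then inverted, with induction finishing the argument. You simply supply more detail than the paper's very terse proof does --- in particular the verification that the three cases are mutually exclusive and the explicit reconstruction of the predecessor in the multiplet case from the nested chain of blocks whose supports contain the second-largest element of $\beta$.
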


\begin{proof}
The last operation bringing to a given partition is uniquely defined by
consideration of the block $\beta$ containing the maximal element of the
partition. If it is a singleton, then the last operation was $d_1$; if it is a
doublet, then it was $M$; if it is a multiplet, then the operation was $d_j$
where $j$ is the maximal number such that the support of multiplet $\mu_j$
contains the last to the end element of $\beta$. In each case, applying of
inverse operation brings to NOPs with lesser numbers of elements.
\end{proof}

Taking the theorem \ref{th:KdV1} into account, the established bijection allows
to associate a certain monomial with each NOP, although not in a quite
effective way, because we first have to build an exression $\varphi\in\Phi_n$
corresponding to $\pi\in\Pi^*_{n-1}$ and then to compute $\expand(\varphi)$:
\[
 \begin{array}{rccc}
   &\Phi_n & \leftrightarrow & \Pi^*_{n-1}\\
  \expand\!\!\!\! &\downarrow & \swarrow & \\
   &f_n &&
 \end{array}
\]
Nevertheless, it is easy to trace at the degree of monomial under this
correspondence; it is one more than the number of multiplets in the partition.
This gives us the following interpretation of the number triangle (\ref{fu}).

\begin{corollary}
The number of NOPs of $n$ elements containing $k$ multiplets is equal to the
number in the $n$-th row and $k$-th column of the number triangle (\ref{fu}),
starting their enumeration from 0.  This number is equal to the coefficient of
$u^{k+1}$ in the polynomial $F_{n+1}(u)=f_{n+1}(u,\dots,u)$ defined by the
recurrent relations
\[
 F_1=u,\quad
 F_{n+1}=u\partial_u(F_n)+\sum^{n-1}_{s=1}F_sF_{n-s},\quad n=1,2,\dots
\]
\end{corollary}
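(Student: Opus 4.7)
The corollary bundles two claims: the polynomial recurrence for $F_n(u)=f_n(u,\dots,u)$, and the interpretation of its coefficients in terms of NOPs with a prescribed number of multiplets. I would dispose of the recurrence first. Under the specialization $u_j\mapsto u$, the total derivative $D=\sum_{j\ge 0}u_{j+1}\partial_{u_j}$ collapses to $u\partial_u$, since each $u_{j+1}$ becomes $u$ and the chain rule gives $\sum_j\partial_{u_j}(f_n)|_{u_j=u}=\partial_u F_n(u)$. Substituting into (\ref{fn}) yields the stated recurrence for $F_n$ immediately.

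For the combinatorial half I would combine Theorem \ref{th:KdV1} with the bijection $\Phi_{n+1}\leftrightarrow\Pi^*_n$ established in the preceding theorem. By Theorem \ref{th:KdV1}, $f_{n+1}=\sum_{\varphi\in\Phi_{n+1}}\expand(\varphi)$; specializing $u_j=u$ turns each $\expand(\varphi)$ into $u^{\deg\varphi}$, where $\deg\varphi$ denotes the number of $u$-symbols appearing in $\varphi$. Writing $\varphi(\pi)$ for the expression associated to $\pi\in\Pi^*_n$ under the bijection, this gives
\[
 F_{n+1}(u)=\sum_{\pi\in\Pi^*_n}u^{\deg\varphi(\pi)},
\]
so what remains is to verify that $\deg\varphi(\pi)$ equals the number of multiplets of $\pi$ plus one.

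I would prove this by induction along the generating operations. On the expression side, $\deg u=1$, $\deg d_j(a)=\deg a$ and $\deg M(a,b)=\deg a+\deg b$. The base case $u\leftrightarrow\{\varnothing\}$ matches $\deg=1$ against $0$ multiplets. The operation $M$ inserts one new doublet in the NOP, so $\mathrm{mult}(M(\rho,\sigma))=\mathrm{mult}(\rho)+\mathrm{mult}(\sigma)+1$, which is exactly the shift needed to balance $\deg M(a,b)=(\mathrm{mult}(\rho)+1)+(\mathrm{mult}(\sigma)+1)$. The operation $d_1$ appends a singleton, which is not a multiplet, in agreement with $\deg d_1(a)=\deg a$.

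The only step that requires a moment of care is $d_j$ for $j>1$, which replaces the nested chain of multiplets $\mu_{j_1},\dots,\mu_{j_s}$ by $\tilde\mu_{j_1},\dots,\tilde\mu_{j_s}$. Each $\tilde\mu_{j_r}$ is again a multiplet: $\tilde\mu_{j_1}=\mu^-_{j_1}\cup\{m,n+1\}$ has at least two elements because $\min\mu_{j_1}<m$ forces $\mu^-_{j_1}\ne\varnothing$, and for $r\ge 2$ both $\mu^-_{j_r}$ and $\mu^+_{j_{r-1}}$ are nonempty. Thus the $s$ affected multiplets are replaced by exactly $s$ multiplets and the total multiplet count is preserved. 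This geometric verification is the only slightly delicate point; once it is in place, the induction closes and the corollary follows.
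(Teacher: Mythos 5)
Your proposal is correct and follows essentially the same route the paper takes: specialize $u_j\mapsto u$ in (\ref{fn}) to get the recurrence for $F_n$, then combine Theorem \ref{th:KdV1} with the bijection $\Phi_{n+1}\leftrightarrow\Pi^*_n$ and check that the number of $u$-symbols in $\varphi(\pi)$ equals the multiplet count of $\pi$ plus one, tracking this through the generating operations. The paper merely asserts this last point (``it is easy to trace the degree of the monomial under this correspondence''), so your explicit verification that $d_j$ for $j>1$ replaces the $s$ nested multiplets by exactly $s$ multiplets is a welcome filling-in of the one step the paper glosses over; it also makes clear that the paper's aside $\deg M(\rho,\sigma)=\deg\rho\deg\sigma$ should read $\deg\rho+\deg\sigma$.
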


\section{Conclusion}

We have established a relation between several classes of set partitions and
generating functions for integrable hierarchies. Hopefully, this observation
may turn useful for both theories. Of course, we have too few examples at the
moment to make far-reaching conclusions. A conjecture is that each integrable
hierarchy has an underlying generating function which may be interpreted as
statistics for some kind of combinatorial objects (possibly unknown). As
further steps, it would be interesting to reveal the combinatorics associated
with the mKdV equation, KdV-like equations of 5-th order, nonlinear
Schr\"odinger equation and so on.

On the other hand, the objects studied in the combinatorics are so plentiful
and diverse that it seems doubtful that any one can be associated with an
integrable hierarchy. In all likelihood, this is a very special property, so it
would be interesting to understand what is the integrability intermediately in
combinatorial terms (rather than of the level of generating functions). In
particular, one can try to obtain a proof of commutativity of the flows of a
hierarchy based on their combinatorial interpretation.

\section*{References}


\begin{thebibliography}{99}

\bibitem{Alexandrov_Mironov_Morozov_Natanzon_2012}
 Alexandrov A, Mironov A, Morozov A and Natanzon S 2012
 Integrability of Hurwitz partition functions.
 {\it J. Phys. A: Math. Theor.}
 \href{http://dx.doi.org/10.1088/1751-8113/45/4/045209}{{\bf 45:4} 045209}

\bibitem{Benoumhani_1996} Benoumhani M 1996
 On Whitney numbers of Dowling lattices.
 {\it Discrete Mathematics}
 \href{http://dx.doi.org/10.1016/0012-365X(95)00095-E}{{\bf 159} 13--33}

\bibitem{Claesson_2001} Claesson A 2001
 Generalized pattern avoidance.
 {\it Europ. J. Combinatorics}
 \href{http://dx.doi.org/10.1006/eujc.2001.0515}{{\bf 22} 961--971}

\bibitem{Comtet_1974} Comtet L 1974
 {\it Advanced combinatorics. The art of finite and infinite expansions}
 (D. Reidel, Dordrecht)

\bibitem{Deift_2000} Deift P 2000
 Integrable systems and combinatorial theory.
 {\it Notices of the AMS}
 \href{http://www.ams.org/notices/200006/fea-deift.pdf}{{\bf 47:6} 631--640}

\bibitem{Dowling_1973} Dowling T A 1973
 A class of geometric lattices based on finite groups.
 {\it J. of Combin. Theory, Ser. B}
 \href{http://dx.doi.org/10.1016/S0095-8956(73)80007-3}{{\bf 14:1} 61--86}

\bibitem{Flajolet_Schott_1990} Flajolet P and Schott R 1990
 Non-overlapping partitions, continued fractions, Bessel functions and
 a divergent series.
 {\it Europ. J. Combinatorics}
 \href{http://dx.doi.org/10.1016/S0195-6698(13)80025-X}{{\bf 11:5} 421--432}

\bibitem{Gelfand_Dikii_1975} Gel'fand I M and Dikii L A 1975
 Asymptotic properties of the resolvent of Sturm--Liouville equations,
 and the algebra of Korteweg--de Vries equations.
 {\it Russian Math. Surveys}
 \href{http://dx.doi.org/10.1070%2FRM1975v030n05ABEH001522}{{\bf 30:5} 77--113}

\bibitem{Grosset_Veselov_2005} Grosset M-P and Veselov A P 2005
 Bernoulli numbers and solitons.
 {\it J. of Nonl. Math. Phys.}
 \href{http://dx.doi.org/10.2991/jnmp.2005.12.4.3}{{\bf 12:4} 469--474}

\bibitem{Ibragimov_Shabat_1980} Ibragimov N Kh and Shabat A B 1980
 Infinite Lie--B\"acklund algebras.
 {\it Funct. Anal. Appl.}
 \href{http://dx.doi.org/10.1007/BF01078315}{{\bf 14:4} 313--315}

\bibitem{Klazar_2003} Klazar M 2003
 Bell numbers, their relatives, and algebraic differential equations.
 {\it J. of Combin. Theory, Ser. A}
 \href{http://dx.doi.org/10.1016/S0097-3165(03)00014-1}{{\bf 102} 63--87}.

\bibitem{Lambert_Loris_Springael_Willox_1994}
 Lambert F, Loris I, Springael J and Willox R 1994
 On a direct bilinearization method: Kaup's higher-order water wave equation
 as a modified nonlocal Boussinesq equation.
 {\it J. Phys. A: Math. Gen.}
 \href{http://dx.doi.org/10.1088/0305-4470/27/15/028}{{\bf 27} 5325--5334}.

\bibitem{Lambert_Springael_2008} Lambert F and Springael J 2008
 Soliton equations and simple combinatorics.
 {\it Acta Appl. Math.}
 \href{http://dx.doi.org/10.1007/s10440-008-9209-3}{{\bf 102} 147--178}

\bibitem{Reiner_1997} Reiner V 1997
 Non-crossing partitions for classical reflection groups.
 {\it Discrete Mathematics}
 \href{http://dx.doi.org/10.1016/S0012-365X(96)00365-2}{{\bf 177:1--3} 195--222}

\bibitem{OEIS} Sloane N J A
 {\it The On-Line Encyclopedia of Integer Sequences,}
 published electronically at \url{http://oeis.org}, 2010
 (A000108 Catalan numbers; A000110 Bell numbers;
  A000295 Euler numbers; A005493 2-Bell numbers;
  A006789 Bessel numbers; A007405 Dowling numbers;
  A008277, A048993 Stirling numbers of the 2nd kind;
  A039755 $B$-analogs of Stirling numbers of the 2nd kind)

\bibitem{Schimming_1995} Schimming R 1995
 An explicit expression for the Korteweg--de Vries hierarchy.
 {\it Acta Appl. Math.}
 \href{http://dx.doi.org/10.1007/978-94-011-0017-5_28}{{\bf 39} 489--505}

\bibitem{Suter_2000} Suter R 2000
 Two analogues of a classical sequence.
 {\it J. Integer Seq.}
 \href{http://www.cs.uwaterloo.ca/journals/JIS/VOL3/SUTER/sut1.html}
 {{\bf 3} Article 00.1.8}.

\end{thebibliography}
\end{document}